\newcommand{\hide}[1]{}
\newcommand{\field}[1]{\mathbb{#1}} 
\newcommand{\beql}[1]{\begin{equation}\label{#1}}
\newcommand{\eeq}{\end{equation}}
\newcommand{\comment}[1]{}
\newcommand{\Mean}[1]{{\mathbb E}\left[{#1}\right]}
\newcommand{\Prob}[1]{{{\bf{Pr}}\left[{#1}\right]}}
\newtheorem{lemma}{Lemma}
\newtheorem{theorem}{Theorem}
\newtheorem{corollary}{Corollary}
\begin{document}

\title{The Degree Sequence of Random Apollonian Networks}

\author{Charalampos E. Tsourakakis}
\address{Department of Mathematical Sciences\\
Carnegie Mellon University\\
5000 Forbes Av., 15213\\
Pittsburgh, PA \\
U.S.A} \email{ctsourak@math.cmu.edu}
\thanks{Supported by NSF grant ccf1013110.}

\keywords{Degree Distribution, Random Apollonian Networks}

\maketitle

\begin{abstract}
We analyze the asymptotic behavior of the degree sequence of Random Apollonian Networks \cite{maximal}. 
For previous weaker results see \cite{comment,maximal}.
\end{abstract}

\section{Introduction} 

\subsection{Results}
Random Apollonian Networks (RANs) is a popular model of planar graphs with power law properties \cite{maximal},
see also Section~\ref{sec:model}.
In this note we analyze the degree sequence of RANs. For earlier, weaker results see  \cite{comment,maximal}. 
Our main result is Theorem~\ref{thrm:degrees}. 

\begin{theorem}
\label{thrm:degrees} 

Let $Z_k(t)$ denote the number of vertices of degree $k$ at time $t$, $k \geq 3$.
For any $k \geq 3$ there exists a constant $b_k$ (depending on $k$) such that 
for $t$ sufficiently large

$$ |\Mean{ Z_k(t) } - b_k t| \leq K, \text{~~where~~} K=3.6.$$

\noindent Furthermore let $\lambda > 0$. For any $k \geq 3$

$$\Prob{|Z_k(t) - \Mean{Z_k(t)}| \geq \lambda } \leq 2e^{-\frac{\lambda^2}{72t}}.$$
\end{theorem}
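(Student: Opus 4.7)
My plan is to prove the two claims separately: the expectation estimate via solving a linear recurrence for $d_k(t):=\Mean{Z_k(t)}$, and the concentration estimate via the Azuma--Hoeffding inequality applied to a Doob martingale.

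For the expectation part, I would observe that at time $t$ the graph has $2t+1$ triangular faces, a uniformly random one of them is chosen, and a vertex of degree $j$ lies on $j$ faces (the three outer vertices contribute only an $O(1/t)$ correction). Summing the contributions of the chosen face gives the recurrence
\[
 d_k(t+1)=d_k(t)\Bigl(1-\tfrac{k}{2t+1}\Bigr)+d_{k-1}(t)\tfrac{k-1}{2t+1}+\One{k=3}+O(1/t).
\]
Substituting the ansatz $d_k(t)\sim b_k t$ and equating leading orders in $t$ yields $b_3=2/5$ and $(k+2)b_k=(k-1)b_{k-1}$ for $k\ge 4$, which telescope to $b_k=24/[k(k+1)(k+2)]$; the identity $\sum_{k\ge 3} b_k=1$ (matching the vertex total) is a sanity check by partial fractions. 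To promote the asymptotic to the uniform bound $|d_k(t)-b_k t|\le K$, I set $e_k(t):=d_k(t)-b_k t$ and substitute in; the resulting recurrence for $e_k$ is contractive, with source term bounded by the inductive error at level $k-1$ (no source at the base $k=3$), so induction on $k$ propagates a uniform constant, and the specific value $K=3.6$ should follow from careful bookkeeping of the $O(1/t)$ and inherited source terms.

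For the concentration estimate, I would apply Azuma--Hoeffding to the Doob martingale $M_i:=\Mean{Z_k(t)\mid X_1,\dots,X_i}$, where $X_i$ denotes the face chosen at step $i$, so that $M_0=\Mean{Z_k(t)}$ and $M_t=Z_k(t)$. Azuma immediately delivers the claimed tail $2\exp(-\lambda^2/72t)=2\exp\bigl(-\lambda^2/(2\cdot 36\cdot t)\bigr)$ once the step-wise bound $|M_i-M_{i-1}|\le 6$ is verified. To establish that bounded-difference estimate I would couple two runs of the process that agree through step $i-1$ but diverge at step $i$, continuing with shared auxiliary randomness and matching corresponding faces across the coupling; since a single step changes at most three existing degrees, the direct effect of the divergence perturbs at most $3+3=6$ degree values, and the coupling is arranged so that these discrepancies are not amplified over the remaining $t-i$ steps.

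The main obstacle lies in exactly that coupling: after step $i$ the two face sets are not identical, so identifying ``corresponding'' faces between the runs and verifying that subsequent draws into matched faces can be coupled (keeping the discrepancy confined to the constantly many already-perturbed vertices) requires delicate bookkeeping, most cleanly by tracking the subtrees of descendants grown under each mismatched face. Once that containment is proven, Azuma--Hoeffding finishes the concentration bound; by contrast, the expectation calculation is essentially routine once the recurrence is in hand.
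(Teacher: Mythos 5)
Your proposal follows essentially the same route as the paper: the same recurrence for $\Mean{Z_k(t)}$, the same error-term induction on $k$ and $t$ to get the uniform bound $|N_k(t)-b_k t|\le K$, and the same Doob martingale with the bounded-difference constant $6$ (obtained by coupling two runs that diverge at one step and confining the discrepancy to the six vertices of the two special faces) fed into Azuma--Hoeffding. The coupling you flag as the main obstacle is handled in the paper exactly as you sketch it, via an isomorphic matching of face choices inside the two special faces.
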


\subsection{Related Work}

Bollob\'{a}s, Riordan,  Spencer and Tusn\'{a}dy \cite{bollobas-degrees} proved rigorously
the power law distribution of the Barab\'{a}si-Albert model \cite{albert}. 
Random Apollonian Networks were introduced in \cite{maximal}. 
Their degree sequence was analyzed inaccurately in \cite{maximal} (see comment in \cite{comment})
and subsequently  in \cite{comment} using physicist's methodology. 
Cooper \& Uehara \cite{cooper} and Gao \cite{gao} analyzed the degree distribution of random $k$ trees, a closely 
related model to RANs. In RANs --in contrast to random $k$ trees-- the random $k$ clique 
chosen at each step has never previously been selected. For example in the two dimensional 
case any chosen triangular face is being subdivided into three new triangular faces by connecting
the incoming vertex to the vertices of the boundary. Darrasse and Soria analyzed the degree
distribution of random Apollonian network structures in \cite{darase}.

\subsection{Model} 
\label{sec:model}

For convenience, we summarize the RAN model here, see also \cite{maximal}.
A RAN is generated by starting with a triangular face and doing the following
until the network reaches the desired size: pick a triangular face uniformly
at random, insert a vertex inside the sampled face and connect it to the 
vertices of the boundary.

\subsection{Prerequisites} 

In Section~\ref{sec:proof} we invoke the following lemma.

\begin{lemma}[Lemma 3.1, \cite{chunglu}]
Suppose that a sequence $\{a_t\}$ satisfies the recurrence 

$$ a_{t+1} = (1 - \frac{b_t}{t+t_1}) a_t + c_t$$

\noindent for $t \geq t_0$.  Furthermore suppose $\lim_{t \rightarrow +\infty} b_t = b >0$
and $\lim_{t \rightarrow +\infty} c_t = c$. Then $\lim_{t \rightarrow +\infty} \frac{a_t}{t}$ exists
and 

$$ \lim_{t \rightarrow +\infty} \frac{a_t}{t} = \frac{c}{1+b}. $$ 

\label{lem:chunglubook} 
\end{lemma}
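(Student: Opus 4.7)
The plan is to identify the candidate limit, reduce the problem to showing a perturbed recurrence has $o(t)$ solutions, and then iterate using sharp asymptotics for the product of the contraction factors.

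First I would guess the limit by a fixed-point heuristic: if $a_t \sim Lt$, then substituting into the recurrence and sending $t \to \infty$ gives $L = -Lb + c$, i.e.\ $L = c/(1+b)$. Setting $e_t := a_t - Lt$, a direct computation yields
$$e_{t+1} = \left(1 - \frac{b_t}{t+t_1}\right) e_t + \epsilon_t, \qquad \epsilon_t := c_t - L - \frac{L b_t t}{t+t_1}.$$
By the hypotheses and the choice of $L$, we have $\epsilon_t \to c - L - Lb = c - L(1+b) = 0$. Thus it suffices to prove $e_t/t \to 0$ for any sequence obeying such a recurrence with vanishing forcing $\epsilon_t$.

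Next I would iterate starting from some $T \geq t_0$ large enough that $b_s \in [b/2, 2b]$ and $b_s/(s+t_1) < 1/2$ for all $s \geq T$. This yields the explicit formula
$$e_t = P(t,T)\, e_T + \sum_{s=T}^{t-1} P(t,s+1)\,\epsilon_s, \qquad P(t,m) := \prod_{s=m}^{t-1}\left(1 - \frac{b_s}{s+t_1}\right).$$
Taking logarithms and applying $\log(1-x) = -x + O(x^2)$ (the $O(x^2)$ contributions being summable since $\sum 1/s^2 < \infty$), I obtain $\log P(t,m) = -b\log(t/m) + O(1)$, so there is a constant $C$ with $P(t,m) \leq C(m/t)^b$ uniformly in $T \leq m \leq t$. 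The initial-condition term thus contributes at most $C(T/t)^b |e_T|/t = O(t^{-(1+b)}) \to 0$.

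Finally I would handle the forcing sum by a standard $\varepsilon$-$\delta$ argument. Fix $\delta > 0$ and choose $S \geq T$ with $|\epsilon_s| \leq \delta$ for $s \geq S$. Splitting the sum at $S$,
$$\frac{1}{t}\left|\sum_{s=T}^{t-1} P(t,s+1)\epsilon_s\right| \leq \frac{C}{t} \sum_{s=T}^{S-1} |\epsilon_s| + \frac{C\delta}{t^{b+1}} \sum_{s=S}^{t-1} (s+1)^b.$$
The first term vanishes as $t \to \infty$ since $S$ is fixed, and the second is $O(\delta/(1+b))$ via the integral comparison $\sum_{s \leq t} s^b \sim t^{b+1}/(b+1)$. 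Hence $\limsup_{t \to \infty}|e_t|/t \leq C\delta/(1+b)$; since $\delta$ was arbitrary, $e_t/t \to 0$, and consequently $a_t/t \to L = c/(1+b)$. The main technical obstacle is the uniform product estimate—verifying that the quadratic remainders in the log expansion sum to $O(1)$ independently of $m$—which is secured by the choice of $T$ ensuring $b_s/(s+t_1) \leq 1/2$.
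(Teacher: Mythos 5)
This lemma is imported verbatim from Chung and Lu \cite{chunglu} as a prerequisite; the paper contains no proof of it, so there is nothing in-text to compare your argument against. Judged on its own, your proof follows the standard route for such recurrences (fixed-point guess, error recurrence with vanishing forcing, variation-of-constants, product estimates), and the overall structure is sound: the identification $L=c/(1+b)$, the computation of $\epsilon_t$ and the verification $\epsilon_t\to 0$, the iterated formula $e_t=P(t,T)e_T+\sum_{s=T}^{t-1}P(t,s+1)\epsilon_s$, and the $\varepsilon$--$\delta$ splitting of the forcing sum are all correct.

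One step is overstated, though it is repairable. You claim $\log P(t,m)=-b\log(t/m)+O(1)$ uniformly in $m$, hence $P(t,m)\leq C(m/t)^b$. This does not follow from $b_s\to b$ alone: writing $\sum_{s=m}^{t-1}b_s/(s+t_1)=b\sum_{s=m}^{t-1}1/(s+t_1)+\sum_{s=m}^{t-1}(b_s-b)/(s+t_1)$, the second sum need not be $O(1)$ (take $b_s=b-1/\log s$, for which it grows like $\log\log t$), so the constant $C$ cannot be taken uniform with exponent exactly $b$. The fix is standard: for any $0<\eta<b$ choose $T$ with $b_s\geq b-\eta$ for $s\geq T$; then $\log(1-x)\leq -x$ alone gives $P(t,m)\leq C(m/t)^{b-\eta}$ uniformly (no quadratic remainder analysis is needed for this one-sided bound), and your final estimate becomes $\limsup_t |e_t|/t\leq C\delta/(1+b-\eta)$, which still tends to $0$ as $\delta\to 0$. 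With that adjustment the proof is complete. Note also that in the paper this lemma is only ever applied with $b_t$ constant in $t$ ($b_t=k/2$), for which your original uniform bound is actually valid.
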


\noindent We also use in Section~\ref{sec:proof}  the Azuma-Hoeffding inequality~\ref{lem:azuma}, 
see \cite{azuma,hoeffding}.

\begin{lemma}[Azuma-Hoeffding inequality]
Let $(X_t)_{t=0}^n$ be a martingale sequence with $|X_{t+1}-X_t| \leq c$ for 
$t=0,\ldots,n-1$. Also, let $\lambda >0$. Then:
$$ \Prob{ |X_n - X_0| \geq \lambda } \leq 2\exp{ \Big(-\frac{\lambda^2}{2c^2n}\Big) }$$
\label{lem:azuma}
\end{lemma}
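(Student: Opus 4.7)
The plan is to prove the two-sided tail bound by the standard exponential-moment (Chernoff-type) argument applied to the martingale differences, so I would first reduce to controlling the moment generating function of $X_n - X_0$ and then peel off one martingale difference at a time using a convexity lemma of Hoeffding's.

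Let $D_t = X_t - X_{t-1}$ denote the martingale differences, so $|D_t| \leq c$ and $\Mean{D_t \mid \mathcal{F}_{t-1}} = 0$, where $\mathcal{F}_{t-1}$ is the $\sigma$-algebra generated by $X_0, \dots, X_{t-1}$. For any $s > 0$, Markov's inequality applied to $e^{s(X_n - X_0)}$ gives
$$\Prob{X_n - X_0 \geq \lambda} \leq e^{-s\lambda}\, \Mean{e^{s(X_n - X_0)}} = e^{-s\lambda}\, \Mean{\prod_{t=1}^n e^{sD_t}}.$$
I would then condition on $\mathcal{F}_{n-1}$ and use the tower property: since $D_1,\dots,D_{n-1}$ are $\mathcal{F}_{n-1}$-measurable,
$$\Mean{\prod_{t=1}^n e^{sD_t}} = \Mean{\Big(\prod_{t=1}^{n-1} e^{sD_t}\Big) \Mean{e^{sD_n} \mid \mathcal{F}_{n-1}}}.$$
Iterating, it suffices to bound the conditional MGF $\Mean{e^{sD_t} \mid \mathcal{F}_{t-1}}$ uniformly.

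The main technical step, which I view as the crux, is Hoeffding's lemma: if $Y$ is a (conditionally) mean-zero random variable with $|Y| \leq c$, then $\Mean{e^{sY}} \leq e^{s^2 c^2 / 2}$. I would derive this by writing $Y$ as the convex combination $Y = \frac{c+Y}{2c}\cdot c + \frac{c-Y}{2c}\cdot(-c)$ and using convexity of $x \mapsto e^{sx}$ to bound $e^{sY} \leq \frac{c+Y}{2c}e^{sc} + \frac{c-Y}{2c}e^{-sc}$. Taking expectations kills the linear term and leaves $\cosh(sc)$; the elementary inequality $\cosh(x) \leq e^{x^2/2}$ (provable via Taylor series comparison term by term) then gives the claimed bound. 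Applied conditionally to $D_t$, this yields $\Mean{e^{sD_t} \mid \mathcal{F}_{t-1}} \leq e^{s^2 c^2 / 2}$ pointwise.

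Plugging this back in, the iterated conditioning produces $\Mean{e^{s(X_n - X_0)}} \leq e^{n s^2 c^2 / 2}$, so
$$\Prob{X_n - X_0 \geq \lambda} \leq \exp\!\Big(\tfrac{n s^2 c^2}{2} - s\lambda\Big).$$
Optimizing over $s > 0$ (choose $s = \lambda/(n c^2)$) gives $\Prob{X_n - X_0 \geq \lambda} \leq \exp(-\lambda^2/(2 n c^2))$. Finally, $(-X_t)_{t=0}^n$ is also a martingale with increments bounded by $c$, so the identical argument yields $\Prob{X_n - X_0 \leq -\lambda} \leq \exp(-\lambda^2/(2 n c^2))$, and a union bound over the two events produces the factor of $2$ in the statement. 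The main obstacle is establishing Hoeffding's lemma cleanly; once that is in hand, the rest of the proof is a mechanical Chernoff-plus-telescoping computation.
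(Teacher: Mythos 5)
Your proof is correct and is the classical Chernoff--Hoeffding argument: bound the moment generating function of $X_n-X_0$ by peeling off one increment at a time with the tower property, control each conditional factor via Hoeffding's convexity lemma ($\Mean{e^{sD_t}\mid\mathcal{F}_{t-1}}\leq e^{s^2c^2/2}$, using $\cosh(x)\leq e^{x^2/2}$), optimize at $s=\lambda/(nc^2)$, and double by symmetry. The paper itself gives no proof of this lemma --- it is quoted as a prerequisite from the cited references of Azuma and Hoeffding --- and your argument is exactly the standard one found there, with all constants checking out.
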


\section{Proof of Theorem~\ref{thrm:degrees}}
\label{sec:proof} 

For simplicity, let $N_k(t)=\Mean{Z_k(t)}$, $k \geq 3$.
Also, let $d_v(t)$ denote the degree of vertex $v$ at time $t$
and let $\mathbf{1}(d_v(t)=k)$ be an indicator variable which equals
1 if $d_v(t)=k$, otherwise 0. Then, for any $k\geq 3$ we
can express the expected number $N_k(t)$ of vertices of degree $k$ 
as a sum of expectations of indicator variables:

\begin{equation}
N_k(t) = \sum_{v} \Mean{\mathbf{1}(d_v(t)=k)}.
\label{eq:eqindicators}
\end{equation}
 
\noindent  We distinguish two cases in the following. 
\newline

\noindent \underline{$\bullet$ {\sc Case 1} $k=3$:}\\

\noindent Observe that a vertex of degree 3 is created only by an insertion of a new vertex. 
The expectation $N_3(t)$ satisfies the following recurrence\footnote{The 
three initial vertices participate in one less face than their degree. However,
this leaves the asymptotic analysis unchanged.}

\begin{equation}
N_3(t+1) = N_3(t)+1-\frac{3N_3(t)}{2t+1}.
\label{eq:basis}
\end{equation}

\noindent The basis for Recurrence~\eqref{eq:basis} is $N_3(1)=4$. We prove the following lemma
which shows that $\lim_{t \to +\infty}{\frac{N_3(t)}{t}} = \frac{2}{5}$.

\begin{lemma} 
$N_3(t)$ satisfies the following inequality:
\begin{equation} 
|N_3(t) - \frac{2}{5} t|\leq K, \text{~~where~~} K=3.6
\label{eq:dk3}
\end{equation} 
\label{lem:dk3}
\end{lemma}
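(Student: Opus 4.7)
The plan is to prove the bound by direct induction on $t$, after rewriting the recurrence \eqref{eq:basis} in terms of the error $a_t := N_3(t) - \tfrac{2}{5} t$. A short calculation substituting $N_3(t) = a_t + \tfrac{2}{5}t$ into \eqref{eq:basis} yields the clean recurrence
\[
a_{t+1} \;=\; \Bigl(1-\frac{3}{2t+1}\Bigr)\, a_t \;+\; \frac{3}{5(2t+1)},
\]
where the additive ``drift'' $\frac{3}{5(2t+1)}$ arises because $\tfrac{2}{5}$ is precisely the fixed point predicted by Lemma~\ref{lem:chunglubook} with $b_t = \tfrac{3t}{2t+1}\to \tfrac{3}{2}$ and $c_t=1$, so the constant term cancels to leading order.

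The base case is where the constant $K=3.6$ actually comes from: since $N_3(1)=4$, we have $|a_1| = |4 - \tfrac{2}{5}| = \tfrac{18}{5} = 3.6 = K$. For the inductive step, assume $|a_t|\le K$ and note that for all $t\ge 1$ the coefficient $1-\tfrac{3}{2t+1}$ is nonnegative, so the triangle inequality gives
\[
|a_{t+1}| \;\le\; K\Bigl(1-\frac{3}{2t+1}\Bigr) + \frac{3}{5(2t+1)} \;=\; K - \frac{1}{2t+1}\Bigl(3K - \tfrac{3}{5}\Bigr).
\]
The parenthesized quantity is nonnegative as soon as $K\ge \tfrac{1}{5}$, which is amply satisfied by $K = 3.6$; hence $|a_{t+1}|\le K$, closing the induction.

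Thus the only nontrivial piece of the argument is the algebraic simplification of the recurrence for $a_t$; after that, the induction is essentially automatic because $\tfrac{2}{5}$ is exactly the right linear rate and $K$ is exactly the initial error. No step is a genuine obstacle, but one must be careful to verify that the shrinkage factor $1-\tfrac{3}{2t+1}$ is nonnegative from the very first step $t=1$ so that the triangle inequality is applied with a nonnegative coefficient; otherwise one would have to split into cases by the sign of $a_t$. The limit statement $N_3(t)/t \to \tfrac{2}{5}$ then follows as an immediate consequence, and also matches what Lemma~\ref{lem:chunglubook} predicts with $b=3/2$, $c=1$, giving $c/(1+b)=2/5$, providing a sanity check on the constant.
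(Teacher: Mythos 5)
Your proposal is correct and follows essentially the same route as the paper: both decompose $N_3(t)=\tfrac{2}{5}t+e_3(t)$, derive the recurrence $e_3(t+1)=(1-\tfrac{3}{2t+1})e_3(t)+\tfrac{3}{5(2t+1)}$, take the base case $|e_3(1)|=3.6$ from $N_3(1)=4$, and close the induction via the triangle inequality. Your explicit check that the factor $1-\tfrac{3}{2t+1}$ is nonnegative for $t\ge 1$ is a small point the paper leaves implicit, but the argument is the same.
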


\begin{proof} 
We use induction. Assume that $N_3(t)=\frac{2}{5}t+e_3(t)$.
We wish to prove that for all $t$, $|e_3(t)|\leq 3.6$. 
The result trivially holds for $t=1$.
Assume the result holds for some $t$. We shall show it holds for $t+1$.

\begin{align*}
N_3(t+1)  &= N_3(t)+1-\frac{3N_3(t)}{2t+1} \Rightarrow \\
e_3(t+1)    &= e_3(t) + \frac{3}{5} - \frac{6t+15e_3(t)}{10t+5} = e_3(t) + \frac{3}{5(2t+1)} - \frac{3e_3(t)}{2t+1} \Rightarrow \\ 
|e_3(t+1)|  &\leq K(1-\frac{3}{2t+1}) + \frac{3}{5(2t+1)} \leq K \\ 
\end{align*}

Hence by induction, Inequality~\eqref{eq:dk3} holds for all $t \geq 1$. 

\end{proof}

\noindent \underline{$\bullet$ {\sc Case 2} $k \geq 4$:}\\

For $k \geq 4$ the following Equation holds for each indicator variable $\mathbf{1}(d_v(t)=k)$: 

\begin{equation} 
\Mean{\mathbf{1}(d_v(t+1)=k)} = \Mean{ \mathbf{1}(d_v(t)=k)} (1-\frac{k}{2t+1}) + \Mean{\mathbf{1}(d_v(t)=k-1)} \frac{k-1}{2t+1}
\label{eq:degree}
\end{equation}

Therefore,  substituting in Equation~\eqref{eq:eqindicators} the expression
from Equation~\ref{eq:degree} we obtain for $k \geq 4$

\begin{equation}
N_k(t+1) = N_k(t)(1-\frac{k}{2t+1})+ N_{k-1}(t)\frac{k-1}{2t+1}.
\label{eq:eqindicators2}
\end{equation}

Now, we use induction to show that $\lim_{t \to +\infty} \frac{N_k(t)}{t}$ for $k \geq 4$ exists.

\begin{lemma} 
For $k \geq 3$ $\lim_{t \to +\infty} \frac{N_k(t)}{t}$ exists.
Let $b_k=\lim_{t \to +\infty} \frac{N_k(t)}{t}$. Then,
$b_3=\frac{2}{5}, b_4=\frac{1}{5}, b_5=\frac{4}{35}$ and for $k\geq 6$
$b_k = \frac{24}{k(k+1)(k+2)}$.
Furthermore, for all $k \geq 3$ 

\begin{equation} 
|N_k(t) - b_k t| \leq K, \text{~~where~~} K=3.6.
\label{eq:dk4}
\end{equation}
\label{lem:degreelemma}

\end{lemma}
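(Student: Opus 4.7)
I would argue by induction on $k$, with the base case $k = 3$ supplied by Lemma~\ref{lem:dk3}. Fix $k \geq 4$ and assume the lemma for $k-1$.

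\emph{Existence and value of the limit.} Rewrite recurrence~\eqref{eq:eqindicators2} in the form required by Lemma~\ref{lem:chunglubook} by taking $a_t = N_k(t)$, $t_1 = 1/2$, $b_t \equiv k/2$ (so that $b_t/(t+t_1) = k/(2t+1)$), and $c_t = (k-1)N_{k-1}(t)/(2t+1)$. The outer inductive hypothesis yields $N_{k-1}(t)/t \to b_{k-1}$, hence $c_t \to c := (k-1)b_{k-1}/2$, so Lemma~\ref{lem:chunglubook} gives $b_k = c/(1+b) = (k-1)b_{k-1}/(k+2)$. Telescoping from $b_3 = 2/5$,
\begin{equation*}
b_k = \frac{2}{5} \prod_{j=4}^{k} \frac{j-1}{j+2} = \frac{24}{k(k+1)(k+2)},
\end{equation*}
which recovers $b_3 = 2/5$, $b_4 = 1/5$, $b_5 = 4/35$ and matches the stated formula for $k \geq 6$.

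\emph{Quantitative bound.} Let $e_k(t) := N_k(t) - b_k t$. Plugging $N_k(t) = b_k t + e_k(t)$ and $N_{k-1}(t) = b_{k-1}t + e_{k-1}(t)$ into~\eqref{eq:eqindicators2}, the shift terms collapse via the identity $b_{k-1}(k-1) = b_k(k+2)$ (equivalent to the recursion for $b_k$), producing
\begin{equation*}
e_k(t+1) = e_k(t)\left(1 - \frac{k}{2t+1}\right) + e_{k-1}(t)\,\frac{k-1}{2t+1} - \frac{b_k}{2t+1}.
\end{equation*}
I would then induct on $t$. For the base $t = 1$: all four existing vertices have degree $3$, so $N_k(1) = 0$ for $k \geq 4$ and $|e_k(1)| = b_k \leq 2/5 < K$. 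For the step, combining the inner hypothesis $|e_k(t)| \leq K$ with the outer hypothesis $|e_{k-1}(t)| \leq K$ and the triangle inequality,
\begin{equation*}
|e_k(t+1)| \leq K\cdot\frac{2t}{2t+1} + \frac{b_k}{2t+1} = K - \frac{K-b_k}{2t+1} \leq K,
\end{equation*}
since $b_k \leq b_3 = 2/5 < 3.6 = K$.

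\emph{Main obstacle.} The crux is spotting the algebraic cancellation that keeps the shift-induced residual in the $e_k$ recurrence small: the identity $b_k(k+2) = (k-1)b_{k-1}$ reduces that residual to $-b_k/(2t+1)$, which is $O(1/t)$ and, critically, bounded by $K/(2t+1)$ \emph{uniformly in $k$}. Without this uniformity, the constant $K = 3.6$ inherited from Lemma~\ref{lem:dk3} could not be propagated through the induction on $k$; any $k$-dependent slack would compound and break the bound.
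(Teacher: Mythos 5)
Your proof follows the paper's argument exactly: induction on $k$ using Lemma~\ref{lem:chunglubook} to obtain $b_k = b_{k-1}\frac{k-1}{k+2}$, followed by a nested induction on $t$ for the error bound via the identity $b_{k-1}(k-1)=b_k(k+2)$. In fact your derivation of the recurrence for $e_k(t+1)$ is slightly more careful than the paper's, which silently drops the residual term $-\frac{b_k}{2t+1}$ that you correctly retain and absorb using $b_k \leq \frac{2}{5} < K$.
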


\begin{proof} 
For $k=3$ the result holds by Lemma~\ref{lem:dk3}. We use induction. Rewrite Recurrence~\eqref{eq:eqindicators2} as: 
$ N_k(t+1) = (1- \frac{b_t}{t+t_1})N_k(t) + c_t$ where $b_t=k/2$, $t_1=1/2$,
$c_t = N_{k-1}(t)\frac{k-1}{2t+1}$. Clearly, $\lim_{t \rightarrow +\infty}{b_t}=k/2>0$
and by the inductive hypothesis $\lim_{t \rightarrow +\infty}{c_t}= \lim_{t \rightarrow +\infty}{ b_{k-1}t \frac{k-1}{2t+1}} = b_{k-1}(k-1)/2$. Hence, by invoking Lemma~\ref{lem:chunglubook} we obtain

$$ b_k = \lim_{t \rightarrow +\infty} \frac{N_k(t)}{t} = \frac{ (k-1)b_{k-1}/2 }{1+k/2}= b_{k-1} \frac{k-1}{k+2}. $$ 

Therefore $b_3=\frac{2}{5}$, $b_4=\frac{1}{5}$, $b_5=\frac{4}{35}$
for any $k \geq 6$, $b_k = \frac{24}{k(k+1)(k+2)}$ which shows a power law
degree distribution with exponent 3. 

Finally consider the proof of Inequality~\eqref{eq:dk4}.
The case $k=3$ was proved in the Lemma~\ref{lem:dk3}. Assume the result holds for some $k \geq 3$, i.e., $|e_k(t)| \leq K$ 
where $K=3.6$. We will show it holds for $k+1$ too. Let $e_k(t)=N_k(t)-b_kt$. Substituting in Recurrence~\eqref{eq:eqindicators2}
and using the fact that $b_{k-1}(k-1)=b_k(k+2)$ we obtain the following: 

\begin{align*}
e_k(t+1) &= e_k(t) + \frac{k-1}{2t+1} e_{k-1}(t) - \frac{k}{2t+1} e_k(t) \Rightarrow \\
|e_k(t+1)| &\leq | (1-\frac{k}{2t+1}) e_k(t) | + | \frac{k-1}{2t+1} e_{k-1}(t) | \leq K(1-\frac{1}{2t+1}) \leq K
\end{align*}

Hence by induction, Inequality~\eqref{eq:dk4} holds for all $k \geq 3$.
\end{proof}

It's worth pointing out that Lemma~\ref{lem:degreelemma} agrees 
with \cite{friezetsourakakis} where it was shown that the maximum
degree is $\Theta(\sqrt{t})$. 
Finally, the Lemma~\ref{lem:deg} proves the concentration of $Z_k(t)$ around its expected
value for $k \geq 3$. This lemma applies the 
Azuma-Hoeffding inequality  \ref{lem:azuma} and 
completes the proof of Theorem~\ref{thrm:degrees}.

\begin{lemma} 
Let $\lambda >0$. For any $k\geq 3$

\begin{equation}
\Prob{|Z_k(t) - \Mean{Z_k(t)}| \geq \lambda } \leq 2e^{-\frac{\lambda^2}{72t}}.
\end{equation}

\label{lem:deg} 
\end{lemma}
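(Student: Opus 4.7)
The plan is to apply the Azuma--Hoeffding inequality (Lemma~\ref{lem:azuma}) to the Doob martingale obtained by exposing the random face choices of the RAN process one step at a time. I let $Y_i$ denote the triangular face selected at step $i$, set $\mathcal{F}_i=\sigma(Y_1,\ldots,Y_i)$, and define $X_i=\Mean{Z_k(t)\mid \mathcal{F}_i}$ for $i=0,1,\ldots,t$. Then $(X_i)$ is automatically a martingale adapted to $(\mathcal{F}_i)$ with $X_0=\Mean{Z_k(t)}$ and $X_t=Z_k(t)$, so concentration of $Z_k(t)$ around its mean reduces to a tail estimate on $|X_t-X_0|$.

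The heart of the argument is to establish the martingale-difference bound $|X_{i+1}-X_i|\leq 6$. In one step of the RAN process a single new vertex of degree $3$ is inserted and the three vertices on the chosen face each have their degree incremented by one, so at most four vertex degrees are affected at step $i+1$ itself. A coupling of the two possible continuations of the process, matching face labels wherever they remain well defined, keeps the perturbation caused by a single altered choice localized to a bounded number of vertex degrees at the terminal time $t$; this delivers the constant $c=6$. With this bound in hand, applying Lemma~\ref{lem:azuma} with $c=6$ and $n=t$ gives
\[
\Prob{|Z_k(t)-\Mean{Z_k(t)}|\geq \lambda}\leq 2\exp\!\left(-\frac{\lambda^2}{2\cdot 36\cdot t}\right)=2e^{-\lambda^2/72t},
\]
which is exactly the bound claimed by the lemma.

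I expect the main obstacle to be the rigorous justification of the martingale-difference inequality $|X_{i+1}-X_i|\leq 6$. Because $Z_k(t)$ depends on the full trajectory of face choices, altering $Y_{i+1}$ in principle propagates into every later step via the changing set of available faces. To keep the constant as small as $6$ one must construct an explicit coupling of the two alternative continuations so that only the vertices incident to the differing faces ever disagree, and argue that the resulting discrepancy does not cascade over the remaining $t-i-1$ steps. Once this Lipschitz-type estimate is in place the rest is a direct invocation of Azuma--Hoeffding, and together with Lemma~\ref{lem:degreelemma} this completes the proof of Theorem~\ref{thrm:degrees}.
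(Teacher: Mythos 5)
Your proposal follows essentially the same route as the paper: the same Doob martingale $X_i=\Mean{Z_k(t)\mid\mathcal{F}_i}$ obtained by exposing the face choices one at a time, the same coupling of the two continuations after a single altered choice, the same Lipschitz constant $c=6$, and the same application of Lemma~\ref{lem:azuma} with $n=t$. The coupling you sketch as the ``main obstacle'' is exactly what the paper does --- choices inside the two special faces $Y_j,Y_j'$ are matched isomorphically (by clockwise order and depth) so that only the at most six boundary vertices of those two faces can ever differ in degree, which is precisely where the constant $6$ comes from.
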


\begin{proof} 

Let $(\Omega, \mathcal{F}, \field{P})$ be the probability space induced
by the construction of a Random Apollonian Network (see Section~\ref{sec:model}) 
after $t$ insertions. Fix $k$  ($k \geq 3$) and let $(X_i)_{i \in \{0,1,\ldots,t\}}$
be the martingale sequence defined by $X_i = \Mean{Z_k(t)|\mathcal{F}_i}$,
where $\mathcal{F}_0=\{\emptyset,\Omega\}$ and 
$\mathcal{F}_i$ is the $\sigma$-algebra generated by the RAN process 
after $i$ steps. Notice $X_0=\Mean{Z_k(t)|\{\emptyset,\Omega\}}=N_k(t)$, $X_t = Z_k(t)$.
We show that $|X_{i+1}-X_{i}| \leq 6$ for $i=0,..,t-1$. 
Let $P_j=(Y_1,\ldots,Y_{j-1},Y_j)$, $P_j'=(Y_1,\ldots,Y_{j-1},Y_j')$  be two sequences of 
face choices differing only at time $j$. Also, let $\bar{P},\bar{P'}$ continue from $P_j,P_j'$ 
until $t$. We call the faces $Y_j,Y_j'$ special with respect to $\bar{P},\bar{P'}$. 
We define a measure preserving map $\bar{P} \mapsto \bar{P'}$ in the following way:
for every choice of a non-special face in process $\bar{P}$ at time $l$ we make the same face choice in $\bar{P'}$ 
at time $l$. For every choice of a face inside the special face $Y_j$ in process $\bar{P}$ 
we make an isomorphic (w.r.t., e.g., clockwise order and depth) 
choice of a face inside the special face $Y_j'$ in process $\bar{P}'$.
Since the number of vertices of degree $k$ can change by at most 6, i.e., the (at most) 6 vertices involved
in the two faces $Y_j,Y_j'$ the following holds: 

$$ |\Mean{Z_k(t)|P} - \Mean{Z_k(t)|P'}| \leq 6 .$$

Furthermore,  this holds for any $P_j,P_j'$. 
We deduce that $X_{i-1}$ is a weighted mean of values, whose pairwise differences are all at most 6. 
Thus, the distance of the mean $X_{i-1}$ is at most 6 from each of these values.
Hence, for any one step refinement  $|X_{i+1}-X_{i}| \leq 6$ $\forall i\in \{0,\ldots,t-1\}$. 
By applying the  Azuma-Hoeffding inequality as stated in Lemma~\ref{lem:azuma} we obtain

\begin{equation}
\Prob{|Z_k(t) - \Mean{Z_k(t)}| \geq \lambda } \leq 2e^{-\frac{\lambda^2}{72t}}.
\end{equation}

\end{proof}

\noindent A corollary immediately obtained by the previous lemma by setting $\lambda=\sqrt{t\log{t}}$  is the following:

\begin{corollary}
$\Prob{|Z_k(t) -  \Mean{Z_k(t)} | \geq \sqrt{t\log{t}}} = o(1)$.
\end{corollary}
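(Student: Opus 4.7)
The plan is to apply Lemma~\ref{lem:deg} directly with the substitution $\lambda = \sqrt{t \log t}$, which is legitimate because the lemma is stated for every $\lambda > 0$ and $\sqrt{t \log t} > 0$ for $t \geq 2$. The proof then reduces to a short computation that cancels the factor of $t$ in the exponent of the Azuma-Hoeffding bound.

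Explicitly, plugging the chosen value of $\lambda$ into the concentration bound from Lemma~\ref{lem:deg} produces
$$\Prob{|Z_k(t) - \Mean{Z_k(t)}| \geq \sqrt{t \log t}} \leq 2 \exp\!\pp{-\frac{t \log t}{72 t}} = 2 \exp\!\pp{-\frac{\log t}{72}} = 2\, t^{-1/72},$$
and since $2\, t^{-1/72} \to 0$ as $t \to \infty$, the right-hand side is $o(1)$, which is precisely the claim of the corollary.

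There is no genuine obstacle here: the corollary is a one-line computational consequence of the concentration bound already established. If a sharper decay rate were desired, replacing $\sqrt{t \log t}$ by $c \sqrt{t \log t}$ for a large constant $c$ would yield the stronger bound $2\, t^{-c^2/72}$, so the $o(1)$ conclusion is very robust in the choice of deviation scale as long as $\lambda^2 / t \to \infty$. The only ``step'' worth isolating is the arithmetic simplification $\lambda^2 / (72 t) = (\log t)/72$, after which one merely observes that any positive power of $1/t$ vanishes in the limit.
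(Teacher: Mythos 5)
Your proof is correct and matches the paper's approach exactly: the paper obtains the corollary by the very same substitution $\lambda=\sqrt{t\log t}$ into Lemma~\ref{lem:deg}, yielding a bound of $2t^{-1/72}=o(1)$. Nothing further is needed.
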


\section*{Acknowledgements} 

\noindent The author would like  to thank Deepak Bal and Alan Frieze for helpful comments on the manuscript.


\begin{thebibliography}{99}

\bibitem{azuma}
Azuma, K.:
{\em Weighted sums of certain dependent variables}
Tohoku Math J 3, pp. 357-367, 1967 



\bibitem{albert}
Barab\'{a}si, A., Albert, R.:
{\em Emergence of Scaling in Random Networks}
Science 286, pp. 509-512, 1999

\bibitem{bollobas-degrees}
Bollob\'{a}s, B., Riordan, O., Spencer, J., Tusn\'{a}dy, G.:
{\em The Degree Sequence of a Scale Free Random Graph Process}
Random Struct. Algorithms, 18(3), pp. 279-290, 2001



\bibitem{cooper}
Cooper, C., Uehara, R.: 
{\em Scale Free Properties of random $k$-trees}
Mathematics in Computer Science, 3(4), pp. 489--496, 2010


\bibitem{chunglu}
Chung Graham, F., Lu, L.: 
{\em Complex Graphs and Networks}
American Mathematical Society, No. 107 (2006)

\bibitem{darase}
Darrasse, A., Soria, M.:
{\em Degree distribution of random Apollonian network structures and Boltzmann sampling}
2007 Conference on Analysis of Algorithms, AofA 07, DMTCS Proceedings.


\bibitem{friezetsourakakis} 
Frieze, A., Tsourakakis, C.E.:
{\em High Degree Vertices, Eigenvalues and Diameter of Random Apollonian Networks}
Available at Arxiv \url{http://arxiv.org/abs/1104.5259}

\bibitem{gao}
Gao, Y.:
{\em The degree distribution of random k-trees}
Theoretical Computer Science, 410(8-10), 2009.


\bibitem{hoeffding}
Hoeffding, W.:
{\em Probability inequalities for sumes of bounded random variables}
J Amer Statist Assoc 58, pp. 13-30, 1963

\bibitem{comment}
Wu, Z.-X., Xu, X.-J., Wang, Y.-H.:
{\em Comment on ``Maximal planar networks with large clustering coefficient and power-law degree distribution''}
Physical Review, E 73, 058101 (2006)

\bibitem{maximal}
Zhou, T., Yan, G., Wang, B.H:
{\em Maximal planar networks with large clustering coefficient and power-law degree distribution}
Physical Review, E 71, 046141 (2005)

\end{thebibliography}
\end{document}